\numberwithin{equation}{section}
\numberwithin{figure}{section}
\theoremstyle{plain}
\newtheorem{theorem}{Theorem}
  \theoremstyle{plain}
\newtheorem{corollary}[theorem]{Corollary}
\newtheorem{observation}[theorem]{Observation}
\newcommand{\gx}{G^{\times}}
\newcommand{\Rmnum}[1]{\expandafter\@slowromancap\romannumeral #1@}
\newcommand{\C}{$\mathcal {C}$}
\newcommand{\R}{$\mathcal {R}$}
\begin{document}

\title{Structural properties of 1-planar graphs and an application to acyclic edge coloring}
\thanks{This is a cover-to-cover translation of its Chinese version that has been published in Scientia Sinica Mathematica.}
\thanks{Research supported by grants NNSF (10871119, 10971121) and RFDP (200804220001) of China}

\author{Xin Zhang}
\address{School of Mathematics\\ Shandong University\\ Jinan 250100, China.}
\email{sdu.zhang@yahoo.com.cn, xinzhang@mail.sdu.edu.cn}

\author{Guizhen Liu}
\address{School of Mathematics\\ Shandong University\\ Jinan 250100, China.}
\email{gzliu@sdu.edu.cn}

\author{Jian-Liang Wu}
\address{School of Mathematics\\ Shandong University\\ Jinan 250100, China.}
\email{jlwu@sdu.edu.cn}

\begin{abstract}
A graph is called $1$-planar if it can be drawn on the plane so that each edge is crossed by at most one other edge. In this paper, we establish a local property of $1$-planar graphs which describes the structure in the neighborhood of small vertices (i.e. vertices of degree no more than seven). Meanwhile, some new classes of light graphs in $1$-planar graphs with the bounded degree are found. Therefore, two open problems presented by Fabrici and Madaras [The structure of 1-planar graphs, Discrete Mathematics, 307, (2007), 854--865] are solved. Furthermore, we prove that each $1$-planar graph $G$ with maximum degree $\Delta(G)$ is acyclically edge $L$-choosable where
$L=\max\{2\Delta(G)-2,\Delta(G)+83\}$.

\vspace{1mm}\noindent\textit{Keywords}: $1$-planar graph; light
graph; acyclic edge coloring.
\end{abstract}

\maketitle

\noindent Please cite this published article as: \emph{X. Zhang, G. Liu, J.-L. Wu. Structural properties of 1-planar graphs and an
application to acyclic edge coloring. Scientia Sinica Mathematica, 2010, 40, 1025--1032}.

\section{Introduction}

In this paper, all graphs considered are finite, simple and undirected. We use $V(G)$, $E(G)$, $\delta(G)$ and $\Delta(G)$ to denote the vertex set, the edge set, the minimum degree and the maximum degree of a graph $G$, respectively. Denote $v(G)=|V(G)|$ and $e(G)=|E(G)|$. Let $d_G(v)$ (or $d(v)$ for simple) denote the degree of vertex $v\in V(G)$. A $k$-, $k^+$- and $k^-$-$vertex$ is a vertex of degree $k$, at least $k$ and at most $k$, respectively.
Any undefined notation follows that of Bondy and Murty \cite{Bondy}.

A graph $G$ is $1$-immersed into a surface if it can be drawn on the surface so that each edge is crossed by at most one other edge. In particular, a graph is $1$-planar if it is $1$-immersed into the plane (i.e. has a plane $1$-immersion). The notion of $1$-planar-graph was introduced by Ringel \cite{Ringel} in the connection with problem of the simultaneous coloring of adjacent/incidence of vertices and faces of plane graphs. Ringel conjectured that each $1$-planar graph is $6$-vertex colorable, which was confirmed by Borodin \cite{Borodin}. Recently, Albertson and Mohar \cite{Albertson} investigated the list vertex coloring of graphs which can be $1$-immersed into a surface with positive genus. Borodin, et al. \cite{Borodin.acy} considered the acyclic vertex coloring of $1$-planar graphs and proved that each $1$-planar graph is acyclically $20$-vertex colorable. The structure of $1$-planar graphs was studied in \cite{Fabrici} by Fabrici and Madaras. They showed that the number of edges in a $1$-planar graph $G$ is bounded by $4v(G)-8$. This implies every $1$-planar graph contains a vertex of degree at most $7$. Furthermore, the bound $7$ is the best possible because of the existence of a $7$-regular $1$-planar graph (see Fig.1 in \cite{Fabrici}). In the same paper, they also derived the analogy of Kotzig theorem on light edges; it was proved that each $3$-connected $1$-planar graph $G$ contains an edge such that its endvertices are of degree at most $20$ in $G$; the bound $20$ is the best possible.

The aim of this paper is to exhibit a detailed structure of $1$-planar graphs which generalizes the result that every $1$-planar graph contains a vertex of degree at most $7$ in Section 2. By using this structure, we answer two questions on light graphs posed by Fabrici and Madaras \cite{Fabrici} in Section 3 and give a linear upper bound of acyclic edge chromatic number of $1$-planar graphs in
Section 4.

\section{Local structure of $1$-planar graphs}

To begin with, we introduce some basic definitions. Let $G$ be a $1$-planar graph. In the following, we always assume $G$ has been drawn on a plane so that every edge is crossed by at most one another edge and the number of crossings is as small as possible (such a dawning is called to be $proper$). So for each pair of edges $x_1y_1,x_2y_2$ that cross each other at a crossing point $z$, their end vertices are pairwise distinct. Let $\bar{C}(G)$ be the set of all crossing points and let $E_0(G)$ be the non-crossed edges in $G$. Then the $associated$ $plane$ $graph$ $\gx$ of $G$ is the plane graph such that $V(\gx)=V(G)\cup \bar{C}(G)$ and $E(\gx)=E_0(G)\cup \{xz,yz|xy\in E(G)\backslash E_0(G)~ {\rm{and~}} z~ {\rm{is~ the~ crossing~ point~ on~}} xy\}$. Thus the crossing points in $G$ become the real vertices in $\gx$ all having degree four. For convenience, we still call the new vertices in $\gx$ crossing vertices and use the notion $\bar{C}(\gx)$ to denote the set of crossing vertices in $\gx$.  A simple graph $G$ is $triangulated$ if every cycle of length $\geq 3$ has an edge joining two nonadjacent vertices of the cycle. We say $G_T$ is a $canonical$ $triangulation$ of a $1$-planar graph $G$ if $G_T$ is obtained from $G$ by the
following operations.\vspace{1mm}

\noindent  \textbf{Step 1}. For each pair of edges $ab,cd$ that cross each other at a point $s$, add edges $ac,cb,bd$ and $da$ "close to $s$", i.e. so that they form triangles $asc, csb, bsd$ and $dsa$ with empty interiors.

\noindent  \textbf{Step 2}. Delete all multiple edges.

\noindent  \textbf{Step 3}. If there are two edges that cross each other then delete one of them.

\noindent \textbf{Step 4}. Triangulate the planar graph obtained after the operation in Step 3 in any way.

\noindent \textbf{Step 5}. Add back the edges deleted in Step 3.

\vspace{1mm}\noindent Note that the associated planar graph $\gx_T$ of $G_T$ is a special triangulation of $\gx$ such that each crossing vertex remains to be of degree four. Also, each vertex $v$ in $\gx_T$ is incident with just $d_{\gx_T}(v)$ $3$-faces. Denote $v_1,\cdots,v_{d}$ to be the neighbors of $v$ in $\gx_T$ (in a cyclic order) and use the notations $v_i^+=v_{i+1}$, $v_i^-=v_{i-1}$, where $d=d_{\gx_T}(v)$ and $i$ is taken modulo $d$.

In the following, we use $c(v)$ to denote the number of crossing vertices which are adjacent to $v$ in $\gx_T$. Then we have the following observations. Since their proofs of them are trivial, we omit them here. In particular, the second observation uses the facts that $G_T$ admits no multiple edge and the drawing of $G_T$ minimizes the number of crossing. \vspace{1mm}

\begin{observation}\label{obs} For a canonical triangulation $G_T$ of a $1$-planar simple graph $G$, we have

{\rm(1)} Any two crossing vertices are not adjacent in $\gx_T$.

{\rm(2)} If $d_{G_T}(v)=3$, then $c(v)=0$.

{\rm(3)} If $d_{G_T}(v)=4$, then $c(v)\leq 1$.

{\rm(4)} If $d_{G_T}(v)\geq 5$, then $c(v)\leq \frac{d_{G_T}(v)}{2}$.

\end{observation}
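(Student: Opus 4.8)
\emph{Proof proposal.} The plan is to work entirely inside the associated plane triangulation $\gx_T$, keeping in mind three facts fixed by the construction of $G_T$: (a) for an original vertex $v$ we have $d_{\gx_T}(v)=d_{G_T}(v)$, since passing from $G_T$ to $\gx_T$ only subdivides crossed edges and subdivision does not change the degree of an endpoint; (b) $\gx_T$ is a plane triangulation (as noted, every vertex $v$ is incident with exactly $d_{\gx_T}(v)$ $3$-faces), so if $v_1,\dots,v_d$ are the neighbours of $v$ in cyclic order then $v_iv_{i+1}\in E(\gx_T)$ for all $i$; and (c) for each crossing $z$ of edges $ab$ and $cd$ of $G_T$, Step~1 has inserted the "kite" edges $ac,cb,bd,da$, the drawing is proper so $a,b,c,d$ are pairwise distinct, and each edge of $G_T$ is crossed at most once.

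First I would prove (1). In $\gx$ every crossing vertex $z$ arises by subdividing a crossed edge, so its four neighbours are the four endpoints of the two edges crossing at $z$, all of which are \emph{original} vertices; hence no two crossing vertices are adjacent in $\gx$. Passing to $\gx_T$ adds nothing between crossing vertices: the triangulation performed in Step~4 is carried out on the planar graph on vertex set $V(G)$ obtained after Step~3, so it only adds edges joining original vertices, and Step~5 restores only (originally crossed) edges of $G$ between original vertices. Therefore no crossing--crossing edge is ever created, proving (1). Statement (4) then follows immediately: by (b) the neighbours $v_1,\dots,v_d$ of $v$ (where $d=d_{\gx_T}(v)=d_{G_T}(v)\ge 5$) form a cyclic sequence with consecutive ones adjacent in $\gx_T$; by (1) no two consecutive neighbours are both crossing vertices, and a subset of a cyclic sequence of length $d$ with no two consecutive elements has size at most $\lfloor d/2\rfloor\le d/2$, so $c(v)\le d_{G_T}(v)/2$.

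For (2) and (3) I would argue by local exhaustion around $v$. If $d_{G_T}(v)=3$ and $v$ has a crossing neighbour $z$, where $z$ is the crossing of the edge $vw$ incident with $v$ and another edge $rs$, then the kite around $z$ gives $v\sim r$ and $v\sim s$ in $G_T$, so $N_{G_T}(v)=\{r,s,w\}$ and hence $N_{\gx_T}(v)=\{r,s,z\}$. The two $3$-faces at $v$ containing $z$ are the kite triangles $vzr$ and $vzs$, so the third $3$-face at $v$ must be $vrs$, which requires $rs\in E(\gx_T)$; but $rs$ was crossed at $z$, hence subdivided, and $G_T$ is simple, so $r\not\sim s$ in $\gx_T$ — a contradiction, giving $c(v)=0$. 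If $d_{G_T}(v)=4$ and $v$ has two crossing neighbours $z_1,z_2$, then by (1) they are not consecutive around $v$, so the cyclic neighbour list of $v$ in $\gx_T$ is $z_1,a,z_2,b$ with $a,b$ original. Writing $z_i$ as the crossing of an edge incident with $v$ and a "partner" edge $e_i$, the kite around $z_i$ forces both endpoints of $e_i$ to lie in $N_{\gx_T}(v)\cap\{\text{original vertices}\}=\{a,b\}$, hence $e_1=e_2=ab$; but then the single edge $ab$ is crossed at both $z_1$ and $z_2$, contradicting 1-planarity. Thus $c(v)\le 1$.

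The only real care needed — the step I would flag as the main (modest) obstacle — is the bookkeeping in the last paragraph: one must correctly identify $N_{\gx_T}(v)$ via fact (a) together with the kite edges of fact (c), and correctly list the $3$-faces around $v$, so that the "third face" argument for $d_{G_T}(v)=3$ and the "repeated crossing" argument for $d_{G_T}(v)=4$ go through cleanly. Once the local picture around a crossing vertex and its kite is set up precisely, each of (1)--(4) is a one- or two-line contradiction, which is why the authors call the proofs trivial.
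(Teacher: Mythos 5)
Your proposal is correct, and in fact it supplies more than the paper does: the authors omit the argument entirely, declaring it trivial and only hinting that part (2) rests on $G_T$ having no multiple edges and on the drawing minimizing the number of crossings. Your versions of (1), (3) and (4) are exactly the expected ones (neighbours of a crossing vertex in $\gx_T$ are original vertices; the link of $v$ in the triangulation $\gx_T$ is a cycle, so crossing neighbours cannot be consecutive; for $d=4$ the two partner edges would both have to be the edge $ab$, forcing one edge to be crossed twice). For (2) you take a slightly different route from the paper's hint: instead of invoking crossing--minimality directly, you use the face structure of the triangulation $\gx_T$ (the two faces at the edge $vz$ are $vzr$ and $vzs$ because the rotation at $z$ alternates $v,r,w,s$, so the third face at $v$ would have to be $vrs$, impossible since the unique $rs$-edge of the simple graph $G_T$ is subdivided by $z$); this is legitimate because the properties you use --- $\gx_T$ is a plane triangulation with every crossing vertex of degree four --- are exactly what the paper grants for a canonical triangulation, with minimality entering only implicitly through that construction. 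One small wrinkle: the step ``$N_{G_T}(v)=\{r,s,w\}$ and hence $N_{\gx_T}(v)=\{r,s,z\}$'' silently assumes the kite edges $vr,vs$ are uncrossed in $G_T$; as written this is a gap, but it repairs itself immediately, since your subsequent identification of the faces $vzr$ and $vzs$ shows $vr,vs\in E(\gx_T)$ (hence uncrossed) without needing that assumption, so the argument stands.
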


Let $v\in V(G_T)$ and $u$ be a crossing vertex in $\gx_T$ such that $uv\in E(\gx_T)$. Then by the definitions of $u^+$ and $u^-$, we have $vu^+, vu^-\in E(\gx_T)$. Furthermore, the path $u^-uu^+$ in $\gx_T$ corresponds to the original edge $u^-u^+$ with a crossing point $u$ in $G_T$. Let $w$ be the neighbor of $v$ in $G_T$ so that $vw$ crosses $u^-u^+$ at $u$ in $G_T$. By the definition of $\gx_T$, we have $wu^-,wu^+\in E(\gx_T)$. We call $w$ the $mirror$-$neighbor$ of $v$ in $G_T$ and $u^-,u^+$ the $image$-$neighbors$ of $v$ in $G_T$. Other neighbors of $v$ in $G_T$ are called $normal$-$neighbors$. Sometimes when we say mirror vertex, image vertex and normal vertex, we refer to mirror neighbor, image neighbor and normal neighbor of $v$. The triangle $u^-wu^+$ in $G_T$ is called the $mirror$-$triangle$ incident with $v$. Since the neighbors of $v$ in $\gx_T$ can be listed in a cyclic order, via replacing the crossing vertex by the mirror vertex incident with it, the neighbors of $v$ in $G_T$ can be also listed in a cyclic order. Note that different crossing vertices are adjacent to different mirror vertices since multiple edges are forbidden in $G$. Let $v_1,\cdots,v_{d_{G_T}(v)}$ be the neighbors of $v$ in $G_T$ in a cyclic order. We define $\Omega(v_iv_j)=\{v_i,v_{i+1},\cdots,v_{j-1},v_{j}\}$, $v_i^{\triangleright}=v_{i+1}$ and $v_i^{\triangleleft}=v_{i-1}$, where $i$ is taken modulo $d_{G_T}(v)$. Note that $G_T$ is a canonical triangulation of $G$. Then $v_1v_2\cdots v_{d_{G_T}(v)}v_1$ is a cycle which is called the $associated$ $cycle$ of $v$, denoted by $C$. We call the path $P_i=v^i_1 v^i_2 \cdots v^i_{2t_i} v^i_{2t_i +1}$ a $segment$ of $C$ if (a) the elements of $\bigcup_{k=0}^{t_i}\{v^i_{2k+1}\}$ are image neighbors of $v$; (b) the elements of $\bigcup_{k=1}^{t_i}\{v^i_{2k}\}$ are mirror neighbors of $v$; (c) the triangles in $G_T$ of the form $v^i_{2k-1}v^i_{2k}v^i_{2k+1}$ where $1\leq k\leq t_i$ are mirror triangles incident with $v$ and (d) $v_1^{i\triangleleft}$ and $v_{2t_i +1}^{i\triangleright}$ are not mirror neighbors of $v$. Then $scope$ of a segment $P_i$ is defined to be the number of mirror triangles incident with $v$ using vertices in $V(P_i)$, denoted by $S(P_i)$. Then we easily have $S(P_i)=t_i$.

\begin{figure}
\begin{center}
  \includegraphics[width=16.0cm,height=5.5cm]{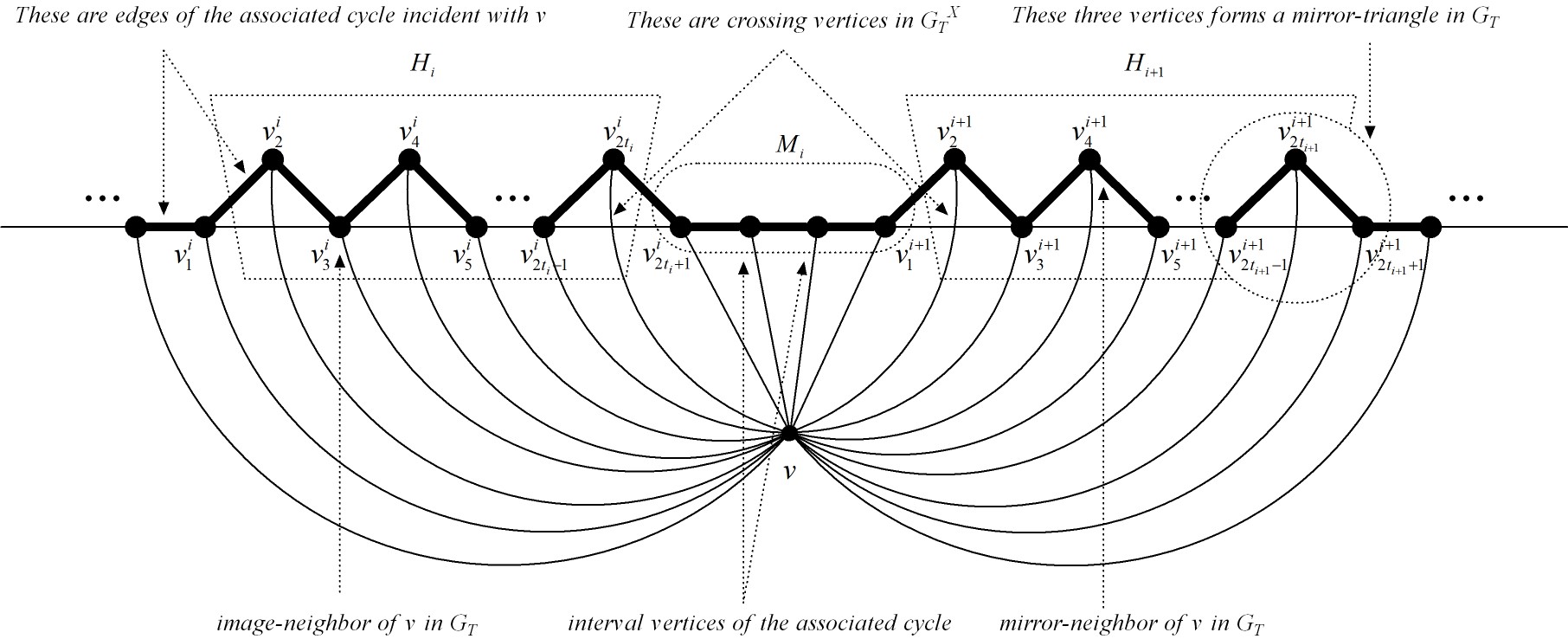}\\
  \caption{Some definitions on a canonical triangulation $G_T$ of a 1-planar graph $G$} \label{cal}
\end{center}
\end{figure}

Now we show the main result in this section.

\begin{theorem}\label{structure}
Let $G$ be a $1$-planar simple graph. Then there exists a vertex $v$ in $G$
with exactly $k$ neighbors $v_1,v_2,\cdots,v_k$ satisfying $d(v_1)\leq d(v_2)\leq
\cdots \leq d(v_k)$ such that one of the following is true.

{\rm(\C1)} $k\leq 2$,

{\rm(\C2)} $k=3$ with $d(v_1)\leq 35$,

{\rm(\C3)} $k=4$ with $d(v_1)\leq 19$ and $d(v_2)\leq 35$,

{\rm(\C4)} $k=5$ with $d(v_1)\leq 14$, $d(v_2)\leq 19$ and
$d(v_3)\leq 35$,

{\rm(\C5)} $k=6$ with $d(v_1)\leq 11$, $d(v_2)\leq 14$, $d(v_3)\leq
19$ and $d(v_4)\leq 35$,

{\rm(\C6)} $k=7$ with $d(v_1)\leq 8$, $d(v_2)\leq 11$, $d(v_3)\leq
14$, $d(v_4)\leq 19$ and $d(v_5)\leq 35$.
\end{theorem}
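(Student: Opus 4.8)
The strategy is a discharging argument on the associated plane graph $\gx_T$ of a canonical triangulation $G_T$ of $G$. Suppose for contradiction that $G$ is a counterexample: every vertex of $G$, hence in particular every vertex of $G_T$, fails all of (\C1)--(\C6). This means every vertex $v$ of $G_T$ has $d_{G_T}(v)\geq 3$, and moreover the degree restrictions on the lightest few neighbors are violated; e.g.\ no $7^-$-vertex has a neighbor of the prescribed small degree, no $3$-vertex has a neighbor of degree $\leq 35$, etc. Since $G_T\supseteq G$ and $d_G\leq d_{G_T}$, a vertex $v$ with $d_{G_T}(v)=k\leq 7$ that meets one of the configurations in $G_T$ would also meet a (weaker) version in $G$ — so the first reduction is to argue that it suffices to find the configuration inside $G_T$, using the canonical triangulation purely as scaffolding, and that the image/mirror/normal-neighbor machinery together with the segment/scope vocabulary introduced above lets us translate incidences in $\gx_T$ back to genuine neighbors and degrees in $G$.

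Next I would assign to each vertex $x\in V(\gx_T)$ the initial charge $\mu(x)=d_{\gx_T}(x)-4$ and to each face the charge $\ell(f)-4$; since $\gx_T$ is a plane triangulation, Euler's formula gives total charge $-8$, and every face has charge $-1$. (Equivalently, absorb the face charge into vertices via a standard face-to-vertex redistribution so that one works only with vertex charges summing to a negative constant, using that crossing vertices all have degree exactly $4$ and thus start neutral.) The discharging rules then send charge from large-degree vertices toward small-degree vertices along the edges of $G_T$, routed through the mirror triangles when a transfer would otherwise have to cross a crossing vertex. The thresholds $35,19,14,11,8$ are exactly what one gets by balancing: a $k$-vertex ($3\leq k\leq 7$) needs to collect $4-k$ units plus its share of face deficit from at most $k$ neighbors (minus the ones that are too small to give), and a vertex of degree $d$ can afford to donate only so much before its own charge $d-4$ goes negative; the worst case forces the largest necessary neighbor degree to be $35$, the second-largest $19$, and so on, which is precisely the arithmetic progression of bounds in the theorem.

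The main obstacle — and the part that the segment/scope formalism is built to handle — is bounding how much charge a high-degree vertex $u$ can be forced to give away, because a single crossing configuration makes $u$ adjacent in $\gx_T$ to a crossing vertex whose mirror triangle may itself be incident with several small vertices, so naive edge-by-edge discharging overcounts demands on $u$. One must show that along any segment $P_i$ of the associated cycle of a small vertex $v$, the scope $S(P_i)=t_i$ controls the number of distinct mirror triangles, that distinct crossing vertices have distinct mirror neighbors (stated above, from simplicity of $G$), and hence that the total outflow from $u$ is bounded by something like $c(u)$-adjusted degree counts; Observation~\ref{obs}(1)--(4) caps $c(u)$ by $d_{G_T}(u)/2$ and rules out crossing–crossing adjacency, which keeps the bookkeeping finite. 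After the rules are fixed one checks, case by case on $k=d_{\gx_T}(v)\in\{3,4,5,6,7\}$ and on whether each neighbor is normal, image, or mirror, that a non-exceptional vertex ends with charge $\geq 0$; since crossing vertices stay at $0$ and the few low-degree non-triangulation artifacts are handled directly, the total would be $\geq 0$, contradicting the total charge $-8$. Therefore some vertex must realize one of (\C1)--(\C6).
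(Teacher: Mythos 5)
Your outline coincides with the paper's actual framework: reduce to a counterexample that is its own canonical triangulation, work in the associated plane graph $\gx_T$, give each vertex charge $d-4$ and each face charge $d(f)-4$, get total charge $-8$ from Euler's formula, and discharge from high-degree vertices to low-degree ones (your remark that a configuration found in $G_T$ descends to $G$ is fine, since the bound chains in (\C2)--(\C6) are nested and degrees only drop when the added edges are removed). But what you have written is a plan rather than a proof, and the deferred part is exactly what carries the theorem. You never fix the discharging rules: the paper sends $\frac13$ or $\frac12$ to $3$-faces according to whether they are special, and graded amounts ($\frac{1}{21},\frac{1}{18},\frac16,\frac{1}{15},\frac15,\frac{4}{15},\frac{1}{12},\frac14,\frac13,\frac{5}{12},\frac19,\frac13,\frac49,\frac59,\frac23$) from vertices in the ranges $9$--$11$, $12$--$14$, $15$--$19$, $20$--$35$, $\geq 36$ to their $7^-$-neighbors. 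The thresholds $8,11,14,19,35$ are not the automatic output of a generic ``balancing'' computation; they are only justified a posteriori by verifying $ch'(v)\geq 0$ separately for every degree class, and your text contains no such verification for any class.

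More seriously, you misplace the crux. The hard point is not routing charge past crossing vertices (the vertex-to-vertex rules act along edges of $G$ directly); it is bounding from above how many small neighbors a high-degree vertex can have, given that the configurations are forbidden. For $d(v)\geq 15$, and especially $d(v)\geq 36$ where each adjacent $3$-vertex must be paid $\frac23$, nonnegativity of the final charge needs the paper's Claims 1--4, namely $n_8^+\geq\lceil\frac{t-x}{2}\rceil+\lceil\frac{j}{2}\rceil$, $n_6+n_7=x+x_2-j+\mu$, $n_5\leq d-\frac{t+x-j}{2}-x_2-\mu-n_3-n_4$ and $2n_3+n_4+t+x\leq d$, which are proved by splitting the mirror triangles incident with $v$ into light and heavy ones, subdividing the light ones into Classes \Rmnum{1}--\Rmnum{3}, and counting $4$- and $3$-vertices segment by segment along the associated cycle (the graphs $G_i$, $H_i$ and the intervals $M_i$). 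Your proposal invokes the segment/scope vocabulary and Observation \ref{obs} but produces no counting statement of this kind; ``the total outflow from $u$ is bounded by something like $c(u)$-adjusted degree counts'' is precisely the missing content. Without explicit rules and these structural estimates the case analysis for large $d$ cannot be completed, so as it stands the argument has a genuine gap.
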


\begin{proof}
The theorem is proved by contradiction. Let $G$ be a simple $1$-planar graph with a fixed embedding in the plane, and suppose $G$ is a counterexample to the theorem. Note that if we add a new edge $e$ between two nonadjacent vertices in $G$ so that $G+e$ is still 1-planar graph, then $G+e$ shall also be a counterexample to the theorem. Hence in the following, without loss of generality, we always assume $G$ is 2-connected and $G=G_T$, where $G_T$ is a canonical triangulation of $G$ that has been draw on a plane properly. In other words, $G$ is just a canonical triangulation of itself. So in the next, there is no necessity to distinguish the two notions $G$ and $G_T$, and when we say $\gx$, we also refer to $\gx_T$. On the other hand, by the definition of associated plane graph, one can observe that $d_{\gx}(v)=d_G(v)$ when $v$ is not a crossing vertex. So in the detail proof below, we either need not to distinguish $d_{\gx}(v)$ and $d_G(v)$ when $v$ is a vertex of $G$, in which case we only use the notation $d(v)$ to represent both $d_{\gx}(v)$ and $d_G(v)$.

For a fixed vertex $v$ of $G$, we define $n_i(v)$ ($n_i$ in short) to be the number of neighbors of $v$ in $G$ which are of degree $i$. For a vertex set $S\subseteq V(G)$, let $n_i(S)=\sum_{v\in S}n_i(v)$. Denote $n_i^+(v)=\sum_{k=i}^{\Delta(G)}n_k(v)$ and $d=d(v)$. For a subgraph $H$ of $G$, $n_i(H)$ represents the number of $i$-vertices contained in $H$. Let $C$ be the associated cycle of $v$. Suppose $C$ has $n$ segments, denoted by $P_1,\cdots,P_n$ (in this cyclic order). Denote $t=\sum_{k=1}^n S(P_k)$. Let $M_i=\Omega(v^i_{2t_i+1}v^{i+1}_1)$ ($i=1,2,\cdots,n$). We call the vertex set $\bigcup_{k=1}^n \{M_k\backslash\{v^k_{2t_k+1},v^{k+1}_1\}\}$ the $interval$ of the associated cycle $C$. For each segment $P_i$, we define a graph $G_i$ so that $V(G_i)=V(P_i)$ and $E(G_i)=\bigcup_{k=1}^{t_i}\{v^i_{2k-1}v^i_{2k+1}\}\bigcup E(P_i)$. Let $H_i=G_i\backslash \{v^i_1,v^i_{2t_i+1}\}$. By the definition of
$P_i$, we have $H_i\subseteq G_i\subseteq G$. (see Fig. \ref{cal}).

A triangle $abc$ in $G$ is $light$ if $\max\{d(a),d(b),d(c)\}\leq 7$. Otherwise we say that $abc$ is $heavy$. Note that for the vertex $v$ described above, there are $t$ mirror triangles incident with it (recall the definition of the parameter $t$). Now, suppose $(t-x)$ of them are heavy and $x$ of them are light. We divide all the light mirror triangles incident with $v$ into three classes.

Class \Rmnum{1}. Triangles in the form $abc$ such that $a$ is mirror
vertex and $b,c$ are image vertices with $d(a)\leq 5$ and
$\min\{d(b),d(c)\}\geq 6$.

Class \Rmnum{2}. Triangles in the form $abc$ such that $a$ is mirror
vertex and $b,c$ are image vertices with $\min\{d(b),d(c)\}\leq 5$.

Class \Rmnum{3}. Triangles in the form $abc$ such that
$\min\{d(a),d(b),d(c)\}\geq 6$.

\noindent Denote the number of triangles belonging to Class \Rmnum{1}, \Rmnum{2} and \Rmnum{3} by $i,j$ and $x_2$, where
$i+j=x_1$ and $x_1+x_2=x$.

\vspace{1mm}\textbf{Claim 1}. $n_8^+\geq
\lceil\frac{t-x}{2}\rceil+\lceil\frac{j}{2}\rceil$.

Since each heavy mirror triangle incident with $v$ covers at least one $8^+$-vertex, there are at least $\lceil\frac{t-x}{2}\rceil$ $8^+$-vertices contained in heavy mirror triangles. And this lower bound is reachable only if each heavy mirror triangle covers exactly one $8^+$-vertex such that each pair of incident mirror triangles share one common $8^+$-vertex.

For each Class \Rmnum{2} light mirror triangle $abc$ such that $a$ is mirror vertex and $b,c$ are image vertices with $d(b)\leq 5$, since (\C1)-(\C4) are forbidden in $G$, we have $d(b)=5$ and another three neighbors of $b$ are all $36^+$-vertices. Let $C$ be the associated cycle of $v$. Then $1\leq|N_C(b)-\{a,c\}|\leq 2$. If $|N_C(b)-\{a,c\}|=1$, let $p\in N_C(b)-\{a,c\}$. Then $p$ is a normal vertex. So $p$ could be incident with at most two image vertices of degree no more than five. If $|N_C(b)-\{a,c\}|=2$, let $N_C(b)-\{a,c\}=\{p,q\}$. Then $bpq$ is a heavy mirror triangle with two $8^+$-vertices. In either case, we would account at least $\lceil\frac{j}{2}\rceil$ new $8^+$-vertices which are not counted
in the above step.

Hence, we have $n_8^+\geq \lceil\frac{t-x}{2}\rceil+\lceil\frac{j}{2}\rceil$.

\vspace{1mm}\textbf{Claim 2}. There is an integer $\mu\geq 0$ such
that $n_6+n_7= x+x_2-j+\mu$.

Since each Class \Rmnum{1} light mirror triangle contains two vertices either of degree $6$ or of degree $7$ and each Class \Rmnum{3} light mirror triangle contains three vertices either of degree $6$ or of degree $7$, we deduce that $n_6+n_7\geq
i+2x_2=x+x_2-j$.

\vspace{1mm}\textbf{Claim 3}. $n_5\leq
d-\frac{t+x-j}{2}-x_2-\mu-n_3-n_4$.

Since (\C1) is forbidden, we have $\delta(G)\geq 3$ and $n_5=d-n_3-n_4-n_6-n_7-n_8^+$. Note that $i+j+x_2=x$. We deduce from Claim 1 and Claim 2 that $n_5\leq
d-\frac{t+x-j}{2}-x_2-\mu-n_3-n_4$.

\vspace{1mm}\textbf{Claim 4}. $2n_3+n_4+t+x\leq d$.

Recall the definitions of $G_i,H_i$ and $M_i$ where $i=1,2,\cdots,n$. Each $G_i$ contains $t_i$ mirror triangles incident with $v$. Suppose $\alpha_i$ of them are light and $\beta_i$ of them are heavy. Then $\alpha_i+\beta_i=t_i$. Since (\C3) is forbidden, no light mirror triangle contains $4$-vertex. So the neighbors of $v$ in $G$ with degree $4$ are all contained either in the heavy mirror triangles or in the interval of $C$. Note that all the image vertices contained in $H_i$ are of degree at least five (see Figure \ref{cal}), so if there is a 4-vertex in $H_i$, then it must be a mirror vertex. In view of this, one can easily claim that $H_i$ contains at most $\beta_i$ $4$-vertices. Furthermore, if $|M_i|=2$, $d(v^{i+1}_1)=4$ and $d(v^i_{2t_i+1})=4$, then $v^i_{2t_i},v^i_{2t_{i}-1},v^{i+1}_2,v^{i+1}_3$ are all $36^+$-vertices since (\C3) is forbidden in $G$. So the triangles $v^i_{2t_{i}-1}v^i_{2t_i}v^i_{2t_i+1}$ and $v^{i+1}_1v^{i+1}_2v^{i+1}_3$ are both heavy. Then one can similarly claim that $H_j$ contains at most $\beta_j-1$ $4$-vertices, where
$j=i,i+1$.

By (2) of Observation \ref{obs} and the definition of $G_i$, if there are $3$-vertices on $C$, they must be on the interval. Suppose there are $\gamma_i$ $3$-vertices in $M_i$ where $\gamma_i\geq 0$. If $|M_i|\geq 3$, then $M_i$ contains at least $2\gamma_i+1$ non-$4$-vertices since (\C2) and (\C3) are forbidden. Here, note that neither $v^i_{2t_i+1}$ nor $v^{i+1}_1$ can be $3$-vertex by (2) of Observation \ref{obs} since each image vertex is adjacent to a crossing vertex in $\gx$. So $n_4(M_i)\leq |M_i|-(2\gamma_i+1)$ if $|M_i|\geq 3$. If $|M_i|=2$, then $\gamma_i=0$. So the above inequation on $n_4(M_i)$ holds unless $d(v^{i+1}_1)=4$ and $d(v^i_{2t_i+1})=4$. In this special case, this inequation becomes $n_4(M_i)\leq |M_i|-(2\gamma_i+1)+1$ indeed, but on the other hand, we have $n_4(H_i)\leq \beta_i-1$ and $n_4(H_{i+1})\leq \beta_{i+1}-1$ by the former arguments. Note that $N_G(v)=V(C)=\bigcup_{i=1}^{n} (V(H_i)\cup V(M_i))$ and $|\bigcup_{k=1}^n M_k|=d-\sum_{k=1}^n(2t_k-1)=d+n-2\sum_{k=1}^n(\alpha_k+\beta_k)=d+n-2x-2(t-x)=d+n-2t$. So we can deduce that $n_4=\sum_{k=1}^n n_4(H_k)+\sum_{k=1}^n n_4(M_k)\leq \sum_{k=1}^n \beta_k+\sum_{k=1}^n |M_k|-\sum_{k=1}^n (2\gamma_k+1)=(t-x)+(d+n-2t)-(2n_3+n)=d-x-t-2n_3$. Hence, we have
$2n_3+n_4+t+x\leq d$.\vspace{1mm}

Now we apply the discharging method to the associated planar graph $\gx$ of $G$. Since $\gx$ is a planar graph and $\sum_{v\in V(\gx)}(d_{\gx}(v)-4)=\sum_{v\in V(G)}(d_{\gx}(v)-4)+\sum_{v\in \bar{C}(\gx)}(d_{\gx}(v)-4)=\sum_{v\in V(G)}(d_{G}(v)-4)$. By Euler's formula, we have
$$\sum_{v\in V(G)}(d_G(v)-4)+\sum_{f\in F(\gx)}(d_{\gx}(f)-4)=-8.$$

Now we define $ch(x)$ to be the initial charge of $x\in V(G)\cup F(\gx)$. Let $ch(v)=d_G(v)-4$ for each vertex $v\in V(G)$ and let $ch(f)=d_{\gx}(f)-4$ and for each face $f\in F(\gx)$. It follows that $\sum_{x\in V(G)\cup F(\gx)}ch(x)=-8$. We now redistribute the initial charge $ch(x)$ and form a new charge $ch'(x)$ for each $x\in V(G)\cup F(\gx)$ by discharging method. Since our rules only move charge around, and do not affect the sum, we have $\sum_{x\in V(G)\cup F(\gx)}ch'(x)=\sum_{x\in V(G)\cup F(\gx)}ch(x)=-8$. A $3$-face in $\gx$ is $special$ if it is incident with one crossing vertex. Our discharging rules are defined as follows:\vspace{1mm}

\noindent(\R1) Each non-special $3$-face in $\gx$ receives $\frac{1}{3}$ from
each vertex incident with it;

\noindent(\R2) Each special $3$-face in $\gx$ receive $\frac{1}{2}$ from each
non-crossing vertex incident with it;

\noindent(\R3) Each vertex $v$ in $G$ with $9\leq d(v)\leq 11$ sends
$\frac{1}{21}$ to each adjacent $7$-vertex in $G$;

\noindent(\R4) Each vertex $v$ in $G$ with $12\leq d(v)\leq 14$ sends
$\frac{1}{18}$ to each adjacent $7$-vertex and $\frac{1}{6}$ to each
adjacent $6$-vertex in $G$;

\noindent(\R5) Each vertex $v$ in $G$ with $15\leq d(v)\leq 19$ sends
$\frac{1}{15}$ to each adjacent $7$-vertex, $\frac{1}{5}$ to each
adjacent $6$-vertex and $\frac{4}{15}$ to each adjacent $5$-vertex
in $G$;

\noindent(\R6) Each vertex $v$ in $G$ with $20\leq d(v)\leq 35$ sends
$\frac{1}{12}$ to each adjacent $7$-vertex, $\frac{1}{4}$ to each
adjacent $6$-vertex, $\frac{1}{3}$ to each adjacent $5$-vertex and
$\frac{5}{12}$ to each adjacent $4$-vertex in $G$;

\noindent(\R7) Each vertex $v$ in $G$ with $d(v)\geq 36$ sends
$\frac{1}{9}$ to each adjacent $7$-vertex, $\frac{1}{3}$ to each
adjacent $6$-vertex, $\frac{4}{9}$ to each adjacent $5$-vertex,
$\frac{5}{9}$ to each adjacent $4$-vertex and $\frac{2}{3}$ to each
adjacent $3$-vertex in $G$.

\vspace{1mm} Let $f$ be a face of $\gx$. Then $d_{\gx}(f)=3$. If $f$ is non-special, then by (\R1), $ch'(f)=ch(f)+3\times\frac{1}{3}=0$. If $f$ is special, then by Observation \ref{obs}, $f$ is incident with two non-crossing vertices. By (\R2), we have
$ch'(f)=ch(f)+2\times\frac{1}{2}=0$.

Let $v$ be a vertex of $G$. Since (\C1) is forbidden, we have $d(v)\geq 3$. Suppose $v$ is a $d$-vertex and has $d$ neighbors $v_1,\cdots,v_d$ in $G$ where $d(v_1)\leq\cdots\leq d(v_d)$. In the
following, we show $ch'(v)\geq 0$ for each such a vertex.

Suppose $d=3$. Since (\C2) is forbidden, $v$ is adjacent three $36^+$-vertices. Note that $v$ is not incident with any special $3$-face by (2) of Observation \ref{obs}. By (\R1) and (\R7), we have $ch'(v)\geq ch(v)-3\times \frac{1}{3}+3\times
\frac{2}{3}=0$.

Suppose $d=4$. Since $c(v)\leq 1$ by (3) of Observation \ref{obs}, $v$ is incident with at most two special $3$-faces. If $d(v_1)\geq 20$, then by (\R1), (\R2) and (\R6), we have $ch'(v)\geq ch(v)-2\times\frac{1}{2}-2\times\frac{1}{3}+4\times\frac{5}{12}=0$. If $d(v_1)\leq 19$, then $d(v_2)\geq 36$ since (\C3) is forbidden. So by (\R7) we have $ch'(v)\geq
ch(v)-2\times\frac{1}{2}-2\times\frac{1}{3}+3\times\frac{5}{9}=0$.

Suppose $d=5$. If $d(v_1)\geq 15$, then by (\R1), (\R2), (\R5) and (4) of Observation \ref{obs}, we have $ch'(v)\geq ch(v)-4\times\frac{1}{2}-\frac{1}{3}+5\times\frac{4}{15}=0$. So we may assume $d(v_1)\leq 14$. If $d(v_2)\geq 20$, then by (\R1), (\R2) and (\R6), we have $ch'(v)\geq ch(v)-4\times\frac{1}{2}-\frac{1}{3}+4\times\frac{1}{3}=0$. So we may assume $d(v_2)\leq 19$. Then $d(v_3)\geq 36$ for otherwise (\C4) occurs. In this case, by (\R1), (\R2) and (\R7), we also have $ch'(v)\geq
ch(v)-4\times\frac{1}{2}-\frac{1}{3}+3\times\frac{4}{9}=0$.

Suppose $d=6$. If $d(v_1)\geq 12$, then by (\R1), (\R2) and (\R4), we have $ch'(v)\geq ch(v)-6\times\frac{1}{2}+6\times\frac{1}{6}=0$. So we may assume $d(v_1)\leq 11$. If $d(v_2)\geq 15$, then by (\R1), (\R2) and (\R5), we have $ch'(v)\geq ch(v)-6\times\frac{1}{2}+5\times\frac{1}{5}=0$. So we may assume $d(v_2)\leq 14$. If $d(v_3)\geq 20$, then by (\R1), (\R2) and (\R6), we have $ch'(v)\geq ch(v)-6\times\frac{1}{2}+4\times\frac{1}{4}=0$. So we may assume $d(v_3)\leq 19$. Then $d(v_4)\geq 36$ for otherwise (\C5) occurs. In this case, by (\R1), (\R2) and (\R7), we also have $ch'(v)\geq
ch(v)-6\times\frac{1}{2}+3\times\frac{1}{3}=0$.

Suppose $d=7$. If $d(v_1)\geq 9$, then by (\R1), (\R2), (\R3) and (4) of Observation \ref{obs}, we have $ch'(v)\geq ch(v)-6\times\frac{1}{2}-\frac{1}{3}+7\times\frac{1}{21}=0$. So we may assume $d(v_1)\leq 8$. If $d(v_2)\geq 12$, then by (\R1), (\R2) and (\R4), we have $ch'(v)\geq ch(v)-6\times\frac{1}{2}-\frac{1}{3}+6\times\frac{1}{18}=0$. So we may assume $d(v_2)\leq 11$. If $d(v_3)\geq 15$, then by (\R1), (\R2) and (\R5), we have $ch'(v)\geq ch(v)-6\times\frac{1}{2}-\frac{1}{3}+5\times\frac{1}{15}=0$. So we may assume $d(v_3)\leq 14$. If $d(v_4)\geq 20$, then by (\R1), (\R2) and (\R6), we have $ch'(v)\geq ch(v)-6\times\frac{1}{2}-\frac{1}{3}+4\times\frac{1}{12}=0$. So we may assume $d(v_4)\leq 19$. Then $d(v_5)\geq 36$ for otherwise (\C6) occurs. In this case, by (\R1), (\R2) and (\R7), we also have $ch'(v)\geq
ch(v)-6\times\frac{1}{2}-\frac{1}{3}+3\times\frac{1}{9}=0$.

Suppose $d=8$. Then by (\R1)-(\R8), we have $ch'(v)\geq
ch(v)-8\times\frac{1}{2}=0$.

Suppose $9\leq d\leq 11$. Recall that $t$ is the number of mirror triangles incident with $v$. So $c(v)=t$. Note that $t\leq \frac{d}{2}$ and $n_7\leq d$. By (\R1), (\R2) and (\R3), we have $ch'(v)\geq ch(v)-\frac{2t}{2}-\frac{d-2t}{3}-\frac{n_7}{21}\geq
\frac{19}{42}d-4>0$.

Suppose $12\leq d\leq 14$. Note that $t\leq \frac{d}{2}$ and $n_6+n_7\leq d$. By (\R1),(\R2) and (\R4), we have $ch'(v)\geq ch(v)-\frac{2t}{2}-\frac{d-2t}{3}-\frac{n_6}{6}-\frac{n_7}{18}\geq
\frac{1}{3}d-4\geq 0$.

Suppose $15\leq d\leq 19$. Note that $t\leq \frac{d}{2}$. By (\R1),(\R2), (\R5) and Claims 2, 3, we have $ch'(v)=ch(v)-\frac{2t}{2}-\frac{d-2t}{3}-\frac{4n_5}{15}-\frac{n_6}{5}-\frac{n_7}{15}\geq \frac{2}{3}d-4-\frac{1}{3}t-\frac{1}{5}(x+x_2-j+\mu)-\frac{4}{15}(d-\frac{t+x-j}{2}-x_2-\mu-n_3-n_4)= \frac{2}{5}d-4-\frac{1}{5}t-\frac{1}{15}x+\frac{1}{15}(x_2+j+\mu+4n_3+4n_4)\geq
\frac{2}{5}d-4-\frac{4}{15}t\geq \frac{4}{15}d-4\geq 0$.

Suppose $20\leq d\leq 35$. Note that $t\leq \frac{d}{2}$. By (\R1),(\R2), (\R6) and Claims 2, 3, 4, we have $ch'(v)=ch(v)-\frac{2t}{2}-\frac{d-2t}{3}-\frac{n_7}{12}-\frac{n_6}{4}-\frac{n_5}{3}-\frac{5n_4}{12} \geq \frac{2}{3}d-4-\frac{1}{3}t-\frac{1}{4}(x+x_2-j+\mu)-\frac{1}{3}(d-\frac{t+x-j}{2}-x_2-\mu-n_3-n_4)-\frac{5n_4}{12} \geq \frac{1}{3}d-4-\frac{1}{12}(2t+x+n_4)+\frac{1}{3}n_3\geq \frac{1}{3}d-4-\frac{1}{12}(d-2n_3+t)+\frac{1}{3}n_3\geq
\frac{1}{4}d-4-\frac{1}{12}t\geq \frac{5}{24}d-4>0$.

Suppose $d\geq 36$. Note that $t\leq \frac{d}{2}$. By (\R1),(\R2), (\R7) and Claims 2, 3, 4, we have $ch'(v)=ch(v)-\frac{2t}{2}-\frac{d-2t}{3}-\frac{n_7}{9}-\frac{n_6}{3}-\frac{4n_5}{9}-\frac{5n_4}{9}-\frac{2n_3}{3} \geq \frac{2}{3}d-4-\frac{1}{3}t-\frac{1}{3}(x+x_2-j+\mu)-\frac{4}{9}(d-\frac{t+x-j}{2}-x_2-\mu-n_3-n_4)-\frac{5n_4}{9}-\frac{2n_3}{3} =\frac{2}{9}d-4-\frac{1}{9}(2n_3+n_4+t+x)+\frac{1}{9}(x_2+j+\mu)\geq
\frac{1}{9}d-4\geq 0$.

By the above arguments, we have $\sum_{x\in V(G)\cup F(\gx)}ch'(x)\geq 0$, a contradiction. Hence we have proved the
theorem.
\end{proof}

\section{Light graphs in $1$-planar graphs of the bounded degree}

Let $\mathcal{H}$ be a family of graphs and let $H$ be a connected graph. Let $\phi(H,\mathcal{H})$ be the smallest integer with the property that each graph $G\in \mathcal{H}$ contains a subgraph $K\cong H$ such that $\max_{x\in V(K)}\{d_G(x)\}\leq \phi(H,\mathcal{H})$. If such an integer does not exist, we write $\phi(H,\mathcal{H})=+\infty$. We say that the graph $H$ is $light$ in the family $\mathcal{H}$ if $\phi(H,\mathcal{H})<+\infty$. By $\mathcal{L}(\mathcal{H})$, we denote the set of light graphs in the family $\mathcal{H}$.

In the next, $P_k$ denotes a path with $k$ vertices and $S_k$ denotes a star with maximum degree $k$. We use the notation $\mathcal{P}^1_{\delta}$ for the family of all $1$-planar graphs of minimum degree at least $\delta$. In \cite{Fabrici}, Fabrici and Madaras showed that $\{P_1,P_2\}\subseteq\mathcal{L}(\mathcal{P}^1_{4})\subseteq \{P_1,P_2,P_3\}$ and $\{P_1,P_2,P_3\}\subseteq\mathcal{L}(\mathcal{P}^1_{5})\subseteq \{P_1,P_2,P_3,P_4,S_3\}$ and posed a few of open problems. Two
of them are stated as follows.\vspace{1mm}

Is $P_3\in \mathcal{L}(\mathcal{P}^1_{4})$ true?

Is $P_4,S_3\in \mathcal{L}(\mathcal{P}^1_{5})$ true?

\vspace{1mm}\noindent In this section, we partially answer these two
questions by applying the results in Section 2.

\begin{theorem} Let $G$ be a simple $1$-planar graph with minimum
degree $\delta\geq 4$. Then $G$ contains a $3$-path with all vertices of degree at most $35$.
\end{theorem}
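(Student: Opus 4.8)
The plan is to invoke Theorem~\ref{structure} directly: since $\delta(G)\geq 4$, possibilities (\C1) and (\C2) are excluded (they would force a vertex of degree $\leq 2$ or a $3$-vertex respectively), so the theorem yields a vertex $v$ for which one of (\C3)--(\C6) holds. In each of these cases, $v$ has at least three neighbors whose degrees are bounded above: namely $v_1,v_2,v_3$ with $d(v_1)\leq 19$ (or smaller), $d(v_2)\leq 35$, and — this is the point I need to check — a \emph{third} neighbor that is also small. Reading the cases: in (\C3) we are told $d(v_1)\leq 19$ and $d(v_2)\leq 35$ with $k=4$, so I still need $d(v_3)\leq 35$; in (\C4)--(\C6) we are explicitly given $d(v_3)\leq 35$. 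So the first step is to argue that in case (\C3) we also have $d(v_3)\leq 35$ — but wait, that is not given. Let me reconsider.

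The cleaner route: in all of (\C3)--(\C6), the vertex $v$ has degree $k\geq 4$, and among its neighbors we have at least $d(v_1)\leq 35$ and $d(v_2)\leq 35$ (indeed $d(v_1)$ is much smaller). Then I would pick the path $v_1 v v_2$: a $3$-path (i.e.\ $P_3$) centered at $v$. Its three vertices are $v_1, v, v_2$ with $d(v_1)\leq 19 \leq 35$, $d(v_2)\leq 35$, and $d(v)=k\leq 7\leq 35$. Hence all three vertices of this $P_3$ have degree at most $35$ in $G$, which is exactly the claim. So actually only the two smallest neighbor bounds plus $d(v)\leq 7$ are needed; the third neighbor is irrelevant.

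The steps, then, are: (i) observe $\delta(G)\geq 4$ rules out (\C1) and (\C2), so Theorem~\ref{structure} produces $v$ with $4\leq d(v)=k\leq 7$ and at least two neighbors $v_1,v_2$ satisfying $d(v_1)\leq 19$ and $d(v_2)\leq 35$ (using the weakest of the four bounds in each case); (ii) form the path $P_3 = v_1 v v_2$, which is a subgraph of $G$ since $v_1,v_2\in N(v)$ and $v_1\neq v_2$; (iii) bound the degrees of all three vertices: $d(v)\leq 7<35$, $d(v_1)\leq 19<35$, $d(v_2)\leq 35$, so $\max_{x\in V(P_3)} d_G(x)\leq 35$. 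This completes the proof.

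I do not expect any real obstacle here: the work was entirely front-loaded into Theorem~\ref{structure}, and this statement is a one-line corollary. The only thing to be careful about is making sure each of (\C3)--(\C6) genuinely supplies two bounded-degree neighbors of $v$ (it does, since each lists bounds on at least $d(v_1)$ and $d(v_2)$ with the largest being $35$), and that $v$ itself has degree at most $7$, which holds because $k\leq 7$ in every relevant case. In fact the statement even has slack, since $d(v_1)$ and (in cases (\C4)--(\C6)) $d(v_3)$ are bounded well below $35$; the bound $35$ is tight only through the role of $d(v_2)$.
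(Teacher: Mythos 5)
Your proposal is correct and is essentially the paper's own argument: since $\delta(G)\geq 4$ excludes (\C1) and (\C2), Theorem~\ref{structure} supplies a vertex $v$ with $4\leq d(v)\leq 7$ satisfying one of (\C3)--(\C6), and the path $v_1vv_2$ has all three degrees at most $35$ because $d(v_1)\leq 19$, $d(v_2)\leq 35$ and $d(v)\leq 7$. The paper states this only as a one-line sketch, and your write-up (after the brief detour about a third bounded neighbor, which you rightly discard as unnecessary) fills in exactly the intended details.
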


\begin{proof}
By Theorem \ref{structure}, $G$ contains one of the configuration in
\{(\C3),(\C4),(\C5),(\C6)\} described in Section $2$. In each case,
we will find a path $uvw$ in $G$ such that
$\max\{d(u),d(v),d(w)\}\leq 35$.
\end{proof}

Similarly we can prove an analogous theorem.

\begin{theorem} Let $G$ be a simple $1$-planar graph with minimum
degree $\delta\geq5$. Then $G$ contains a $3$-star
with all vertices of degree at most $35$.
\end{theorem}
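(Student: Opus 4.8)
The plan is to mimic exactly the proof strategy of the previous theorem, exploiting that Theorem~\ref{structure}, when specialized to minimum degree $\delta \ge 5$, forces the configuration to have $k \ge 5$ (since $(\C1)$--$(\C4)$ all involve a vertex $v_1$ of degree $\le \delta - 1 \le 4$ in the relevant small cases, or $k\le 2$; more precisely $(\C1)$ is impossible as $\delta\ge 5$, and $(\C2),(\C3),(\C4)$ are ruled out because they require $k\le 4$, contradicting $\delta\ge5$ applied to the center vertex $v$ which has $k=d(v)\ge 5$). So only $(\C5)$ with $k=6$ or $(\C6)$ with $k=7$ can occur.

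In either surviving case we have a vertex $v$ together with $k \ge 6$ neighbors $v_1,\dots,v_k$ whose degrees satisfy $d(v_1)\le d(v_2)\le \cdots$, and in both cases the four smallest neighbor-degrees are bounded: $d(v_1)\le 11$, $d(v_2)\le 14$, $d(v_3)\le 19$, $d(v_4)\le 35$ (for $(\C6)$ the bounds on $v_1,v_2,v_3$ are even smaller). First I would observe $d(v)\le k\cdot$ — no; rather, I need $d(v)$ itself bounded: in $(\C5)$, $d(v)=k=6\le 35$, and in $(\C6)$, $d(v)=k=7\le 35$. Thus $v$ has degree at most $7\le 35$. Then I take the star $K$ centered at $v$ with the three leaves $v_1,v_2,v_3$. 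Its leaf-degrees are at most $19<35$ and its center-degree is at most $7<35$, so $\max_{x\in V(K)}d_G(x)\le 35$. This $K$ is a copy of $S_3$ (a star with three edges, i.e. $K_{1,3}$, which has maximum degree $3$), and it is a subgraph of $G$ since $v_1,v_2,v_3$ are genuine neighbors of $v$ in $G$ (the configuration $(\C i)$ in Theorem~\ref{structure} refers to neighbors in $G$ itself, not in $\gx$).

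The only subtlety — and the step I'd flag as the place to be careful — is confirming that the configurations $(\C1)$--$(\C4)$ are genuinely excluded under $\delta\ge 5$, since Theorem~\ref{structure} only guarantees that \emph{some} configuration holds, and one must check that the small-$k$ ones cannot. The key point is that in $(\C i)$ the integer $k$ equals the degree of the exhibited vertex $v$, so $\delta\ge 5$ immediately kills $k\le 2$ (that is $(\C1)$), $k=3$ (that is $(\C2)$), and $k=4$ (that is $(\C3)$). Configuration $(\C4)$ has $k=5$ but demands $d(v_1)\le 14$ and in particular is compatible with $\delta\ge5$; however it only gives three neighbors $v_1,v_2,v_3$ with bounded degree ($\le 14,19,35$), which is still enough: the star on $v$ with leaves $v_1,v_2,v_3$ already works, with center-degree $d(v)=5\le 35$ and leaf-degrees $\le 19$. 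So in fact $(\C4),(\C5),(\C6)$ all suffice, and I would simply say: in each of $(\C4)$, $(\C5)$, $(\C6)$ we have a vertex $v$ of degree $k\in\{5,6,7\}$ with at least three neighbors $v_1,v_2,v_3$ of degree at most $35$, and the induced star $vv_1,vv_2,vv_3\cong S_3$ has all vertices of degree at most $35$, which is the desired light $S_3$. No discharging or further combinatorial work is needed; the whole content is bookkeeping against the case list of Theorem~\ref{structure}.
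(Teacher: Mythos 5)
Your proposal is correct and is essentially the paper's own argument: the paper proves this theorem ``similarly'' to the $\delta\ge 4$ case, i.e.\ by invoking Theorem~\ref{structure}, noting that $\delta\ge 5$ (and $k=d(v)$) excludes {\rm(\C1)}--{\rm(\C3)}, and reading off the light $S_3$ centered at $v$ with leaves $v_1,v_2,v_3$ in each of {\rm(\C4)}, {\rm(\C5)}, {\rm(\C6)}. Your opening claim that {\rm(\C4)} is also excluded is a slip, but you correct it yourself later, and the final case analysis (covering {\rm(\C4)}, {\rm(\C5)}, {\rm(\C6)}) is exactly what is needed.
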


Hence we have the following many corollaries.

\begin{corollary}
$P_3\in \mathcal{L}(\mathcal{P}^1_{4})$.
\end{corollary}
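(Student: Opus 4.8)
The plan is to obtain the corollary immediately from the theorem of this section asserting that every simple $1$-planar graph $G$ with $\delta(G)\ge 4$ contains a $3$-path (i.e.\ a copy of $P_3$) all of whose vertices have degree at most $35$ in $G$. First I would recall the definition of lightness given at the start of this section: $\phi(P_3,\mathcal{P}^1_4)$ is the least integer $\phi$ such that every $G\in\mathcal{P}^1_4$ contains a subgraph $K\cong P_3$ with $\max_{x\in V(K)}d_G(x)\le\phi$, and by definition $P_3\in\mathcal{L}(\mathcal{P}^1_4)$ means precisely that this least integer exists, i.e.\ is finite.

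The theorem on $3$-paths shows that the value $\phi=35$ already has the required property for the family $\mathcal{P}^1_4$, so the least such integer is at most $35$; in particular it is finite, and therefore $P_3\in\mathcal{L}(\mathcal{P}^1_4)$. The only point one might check for the sake of pedantry is that $\mathcal{P}^1_4\neq\emptyset$, so that $\phi(P_3,\mathcal{P}^1_4)$ is a genuine nonnegative integer; this is clear, e.g.\ the $7$-regular $1$-planar graph mentioned in the introduction belongs to $\mathcal{P}^1_4$.

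There is no real obstacle at the level of the corollary: all of its content lies in the structural Theorem~\ref{structure} and in its application to $3$-paths. It is worth adding one remark connecting back to the motivation: since Fabrici and Madaras had shown $\{P_1,P_2\}\subseteq\mathcal{L}(\mathcal{P}^1_4)\subseteq\{P_1,P_2,P_3\}$, the corollary pins this down to the exact equality $\mathcal{L}(\mathcal{P}^1_4)=\{P_1,P_2,P_3\}$, which answers affirmatively the first open problem of Fabrici and Madaras quoted above.
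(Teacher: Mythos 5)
Your proposal is correct and matches the paper's (implicit) argument exactly: the corollary is an immediate consequence of the theorem that every $1$-planar graph with $\delta\geq 4$ contains a $3$-path with all vertices of degree at most $35$, which shows $\phi(P_3,\mathcal{P}^1_{4})\leq 35<+\infty$. Nothing further is needed.
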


\begin{corollary}
$\mathcal{L}(\mathcal{P}^1_{4})=\{P_1,P_2,P_3\}$.
\end{corollary}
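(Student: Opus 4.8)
The plan is to assemble the statement from two facts about light graphs in $\mathcal{P}^1_{4}$ that are already available. First I would recall the inclusion chain of Fabrici and Madaras, $\{P_1,P_2\}\subseteq\mathcal{L}(\mathcal{P}^1_{4})\subseteq \{P_1,P_2,P_3\}$: its right half gives at once the upper bound $\mathcal{L}(\mathcal{P}^1_{4})\subseteq \{P_1,P_2,P_3\}$, while its left half records that $P_1$ and $P_2$ belong to $\mathcal{L}(\mathcal{P}^1_{4})$.

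Next I would invoke the preceding Corollary, namely $P_3\in\mathcal{L}(\mathcal{P}^1_{4})$, which is the only substantive ingredient; it rests on the theorem of this section asserting that every simple $1$-planar graph with $\delta\geq 4$ contains a $3$-path all of whose vertices have degree at most $35$, and that theorem is in turn deduced from the structural Theorem \ref{structure} by inspecting the configurations (\C3)--(\C6). Putting $P_1,P_2\in\mathcal{L}(\mathcal{P}^1_{4})$ together with $P_3\in\mathcal{L}(\mathcal{P}^1_{4})$ yields the reverse inclusion $\{P_1,P_2,P_3\}\subseteq\mathcal{L}(\mathcal{P}^1_{4})$.

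Combining the two inclusions forces $\mathcal{L}(\mathcal{P}^1_{4})=\{P_1,P_2,P_3\}$, as claimed. There is essentially no obstacle in this last bookkeeping step; all the difficulty has been pushed into establishing $P_3\in\mathcal{L}(\mathcal{P}^1_{4})$ with the explicit bound $35$, and into the Fabrici--Madaras fact that no connected graph beyond $P_3$ can be light in $\mathcal{P}^1_{4}$ (so that nothing outside $\{P_1,P_2,P_3\}$ slips into the set). Both of these are already in hand, so the corollary follows immediately.
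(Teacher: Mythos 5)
Your argument is correct and matches the paper's (implicit) reasoning exactly: combine the Fabrici--Madaras inclusions $\{P_1,P_2\}\subseteq\mathcal{L}(\mathcal{P}^1_{4})\subseteq\{P_1,P_2,P_3\}$ with the preceding corollary $P_3\in\mathcal{L}(\mathcal{P}^1_{4})$, which itself comes from the $3$-path theorem derived from Theorem \ref{structure}. Nothing further is needed.
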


\begin{corollary}
$S_3\in \mathcal{L}(\mathcal{P}^1_{5})$.
\end{corollary}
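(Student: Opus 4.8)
The plan is to mimic the proof of Theorem~4 (the $P_3$ statement) but track the extra structural information that each of the configurations (\C4), (\C5), (\C6) of Theorem~\ref{structure} provides. Recall that in Corollary~7 we invoke Theorem~\ref{structure} on a graph $G$ with $\delta(G)\geq 5$; since $\delta\geq 5$ rules out (\C1), and since $k=3$ together with $d(v_1)\le 35$ already gives a star $S_3$ centered at a $3$-vertex—wait, no: with $\delta\ge 5$ case (\C2) asks for a $3$-vertex, which cannot occur, so (\C2) is vacuous too. Hence $G$ must realize one of (\C4), (\C5), (\C6), i.e. there is a vertex $v$ of degree $k\in\{5,6,7\}$ whose three smallest neighbours $v_1,v_2,v_3$ satisfy $d(v_1)\le 14\le 19\le\cdots$, in particular $d(v_1)\le 14$, $d(v_2)\le 19$, $d(v_3)\le 35$, and of course $d(v)\le 7\le 35$.

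First I would fix such a vertex $v$ and pick three of its neighbours that are mutually suitable to serve as the leaves of a $3$-star $S_3$. The obvious candidate is the star with center $v$ and leaves $v_1,v_2,v_3$: we have $d(v)\le 7$, $d(v_1)\le 14$, $d(v_2)\le 19$, $d(v_3)\le 35$, so every vertex of this $S_3$ has degree at most $35$, and since $v$ has at least $5$ neighbours in each of (\C4)--(\C6) the star $S_3\cong K_{1,3}$ genuinely sits inside $G$. That is literally all that is needed, so the argument is even shorter than the proof of Theorem~4; no path structure or canonical-triangulation bookkeeping is required at this last stage, because the degree caps in Theorem~\ref{structure} are stated directly in terms of $d(\cdot)$ in $G$.

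The only point that deserves care—and the one I would expect a referee to flag—is the bookkeeping about which configurations survive the hypothesis $\delta\ge 5$: one must be sure that Theorem~\ref{structure} still applies to $G$ (it does, since it applies to every simple $1$-planar graph) and that (\C1), (\C2), (\C3) are all impossible when $\delta(G)\ge 5$ because they each require a vertex of degree $k\le 3$ (or, for (\C3), a vertex of degree $k=4$), forcing $\delta(G)\le 4$. Hence the realized configuration is (\C4), (\C5) or (\C6), each of which exhibits the desired $3$-star. Thus $S_3$ is light in $\mathcal P^1_5$ with $\phi(S_3,\mathcal P^1_5)\le 35$, proving Corollary~7.
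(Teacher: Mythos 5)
Your proposal is correct and follows essentially the same route as the paper: the paper deduces this corollary from its $3$-star theorem for $\delta\geq 5$, which in turn is obtained from Theorem~\ref{structure} exactly as you describe (with $\delta\geq 5$ excluding (\C1)--(\C3), and any of (\C4)--(\C6) yielding a vertex $v$ of degree at most $7$ whose three smallest-degree neighbours all have degree at most $35$, giving the required $S_3$). You merely write out explicitly the case analysis that the paper leaves implicit under ``similarly we can prove an analogous theorem''.
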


\section{Acyclic edge coloring of $1$-planar graphs}

A mapping $c$ from $E(G)$ to the sets of colors $\{1,\cdots,k\}$ is called a $proper$ $k$-$edge$ $coloring$ of $G$ provided any two adjacent edges receive different colors. The $edge$ $chromatic$ $number$ $\chi'(G)$ is the minimum number of colors needed to color the edges of G properly. A $proper$ $k$-$edge$ $coloring$ $c$ of $G$ is called an $acyclic$ $k$-$edge$-$coloring$ of $G$ if there are no bichromatic cycles in $G$ under the coloring $c$. The smallest number of colors such that $G$ has an acyclic edge coloring is called the $acyclic$ $chromatic$ $number$ of $G$, denoted by $\chi'_a(G)$. Acyclic edge coloring was introduced by Alon et al. \cite{Alon}, and they presented a linear upper bound on $\chi'_a(G)$. It was proved that $\chi'_a(G)\leq 64\Delta(G)$ holds for every graph, which was later improved to $16\Delta(G)$ by Molloy and Reed \cite{Molloy}. For planar graph $G$, A. Fiedorowicz et al. \cite{Fiedorowicz} proved that $\chi'_a(G)\leq 2\Delta(G)+29$. Recently, Hou et al. \cite{Hou} gave a better upon bound. They showed that $\chi'_a(G)\leq \max\{2\Delta(G)-2,\Delta(G)+22\}$ holds for each planar graph. Let $\phi$ be an edge coloring of $G$. For any vertex $v\in V(G)$, we define $\phi(v)=\{\phi(uv)|u\in N(v)\}$. In this section, we consider the acyclic edge coloring of $1$-planar graphs.

\begin{theorem} \label{acyclic}
Let $G$ be a $1$-planar simple graph. Then $\chi'_a(G)\leq
\max\{2\Delta(G)-2,\Delta(G)+83\}$.
\end{theorem}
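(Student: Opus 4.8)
The plan is to prove Theorem~\ref{acyclic} by a minimal counterexample argument combined with the structural information of Theorem~\ref{structure}. Let $G$ be a $1$-planar simple graph with the fewest edges that is \emph{not} acyclically edge $L$-colorable, where $L=\max\{2\Delta(G)-2,\Delta(G)+83\}$. Since every proper subgraph of $G$ is acyclically $L$-edge-colorable (note $\Delta$ of a subgraph is at most $\Delta(G)$, and $L$ is chosen with respect to $\Delta(G)$), $G$ is edge-critical in the usual sense used for acyclic edge coloring: for any edge $e$, the graph $G-e$ has an acyclic $L$-edge-coloring, and the obstruction to extending it back to $e$ is that every available color for $e$ either conflicts with an adjacent edge or closes a bichromatic cycle. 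Standard counting of these obstructions gives lower bounds on the degrees of the endpoints of $e$ and on the degrees of the vertices in the neighborhood; I expect to recover in this way that $G$ cannot contain any of the configurations (\C1)--(\C6) of Theorem~\ref{structure}, which is the desired contradiction since $G$ must contain one of them.

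Concretely, I would first record the local extension lemma: if $uv\in E(G)$ and $\varphi$ is an acyclic $L$-edge-coloring of $G-uv$, then a color $\alpha$ cannot be used on $uv$ only if (i) $\alpha\in\varphi(u)\cup\varphi(v)$, or (ii) $\alpha\in\varphi(u)\cap\varphi(v)$ is part of a $(\alpha,\beta)$-bichromatic path between $u$ and $v$ for some $\beta$ already present at both ends. The number of colors forbidden by (i) is at most $d(u)+d(v)-2$, and the number forbidden by (ii) is at most $\min\{d(u),d(v)\}-1$; hence as soon as $L\ge d(u)+d(v)-2 + (\min\{d(u),d(v)\}-1)$ the edge $uv$ is never a true obstruction, so every such edge can be assumed removed/irrelevant. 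Refining this, the real work is in the cases where $v$ is a \emph{small} vertex ($d(v)\le 7$) adjacent to one or more large vertices: here one must carefully delete a well-chosen edge $uv$ incident to $v$, take an acyclic $L$-edge-coloring of $G-uv$, possibly recolor a bounded number of edges around $v$ to free up colors, and then count the forbidden colors using the precise degree bounds $d(v_1)\le 8, d(v_2)\le 11, d(v_3)\le 14, d(v_4)\le 19, d(v_5)\le 35$ supplied by configuration (\C6) (and the analogous, easier bounds in (\C1)--(\C5)). The constant $83$ is engineered precisely so that $\Delta(G)+83$ dominates the sum of the ``small'' degree bounds $8+11+14+19+35 = 87$ minus the savings obtained from recoloring, while $2\Delta(G)-2$ handles the case $k\le 2$ and the edges between two genuinely large vertices.

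The main obstacle, as always in acyclic edge coloring proofs, is controlling the bichromatic cycles created when one recolors edges around the small vertex $v$: after swapping a color on some edge $v v_i$ to liberate a color for the critical edge $uv$, new bichromatic cycles through $v$ may appear, and one must argue there is still an available color. The standard device is to choose the edge to delete as one joining $v$ to its \emph{largest} neighbor $v_k$, so that the many colors appearing at $v_k$ are not a constraint (they are counted against $\Delta(G)$, absorbed by the $+83$ slack), and to use a counting argument: the total number of ``bad'' colors contributed by the at-most-seven neighbors of $v$, even in the worst configuration (\C6), is bounded by roughly $d(v_1)+\dots+d(v_6)+\Delta(G) - (\text{constant})$, which is at most $L$. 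I would handle each of the six configurations in turn, each time (a) picking the deletion edge, (b) writing down the set of forbidden colors for its re-insertion, (c) bounding its size by $L-1$ using the degree constraints, possibly after one local recoloring step whose validity is checked by a short bichromatic-path analysis, and (d) concluding $uv$ can be colored, contradicting criticality. Since $G$ must realize one of (\C1)--(\C6), this exhausts all cases and proves $\chi'_a(G)\le L$.
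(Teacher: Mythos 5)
Your plan is not the paper's argument, and its central quantitative step is false as stated, which leaves a genuine gap. In your ``local extension lemma'' you claim that when an acyclic coloring $\varphi$ of $G-uv$ is extended to $uv$, the colors forbidden by acyclicity lie in $\varphi(u)\cap\varphi(v)$ and number at most $\min\{d(u),d(v)\}-1$. That is not how the obstruction works: a color $\alpha\notin\varphi(u)\cup\varphi(v)$ (so already safe for properness) can still be unusable, because an $(\alpha,\beta)$-alternating $u$--$v$ path whose end edges carry a color $\beta\in\varphi(u)\cap\varphi(v)$ closes a bichromatic cycle once $uv$ gets color $\alpha$. For each shared color $\beta$ the bad $\alpha$'s are the colors seen at the far endpoint of the $\beta$-edge, i.e.\ up to $\Delta(G)-1$ of them, so the correct crude bound is of order $\min\{d(u),d(v)\}\cdot\Delta(G)$, not $d(u)+d(v)-2+\min\{d(u),d(v)\}-1$. (A sanity check: if your lemma were true, deleting an edge at the guaranteed $7^-$-vertex would give $\chi'_a(G)\le\Delta(G)+13$ for every $1$-planar graph with no structural work at all.) Because of this, the part of your plan that must do the real work --- ``recolor a bounded number of edges around $v$'' and ``a short bichromatic-path analysis'' --- is precisely where the difficulty sits, and it is not specified; in particular nothing in the proposal shows how the bounds $8,11,14,19,35$ from (\C6) enter a valid count, and your arithmetic ($87$ minus unspecified savings) does not pin down why $83$ suffices. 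The $\delta=2$ case also fails under the naive one-edge deletion: properness plus one shared color already threatens up to $2\Delta(G)-1$ colors, which $L=2\Delta(G)-2$ does not cover without an extra idea.

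The paper proceeds differently and avoids all bichromatic-path and recoloring analysis. For $\delta(G)=2$ it deletes the $2$-vertex $v$ and, if $v_1v_2\notin E(G)$, adds the edge $v_1v_2$ (still $1$-planar, routed through the old position of $v$), then gives $vv_1$ the color of $v_1v_2$ so that cycles through $v$ correspond to cycles through $v_1v_2$ in the smaller graph. For $\delta(G)\ge 3$ it takes the vertex $v$ from Theorem~\ref{structure}, deletes $v$ entirely (again adding $v_{d-1}v_d$ if absent), and colors the $d$ edges at $v$ greedily in the order $vv_{d-1},vv_d,vv_1,\dots,vv_{d-2}$, choosing each color outside the \emph{entire} palettes of the relevant small-degree neighbors; acyclicity is then automatic because the first edge's color is controlled via the edge $v_{d-1}v_d$ and every later color avoids $\phi(v_k)$ for all small neighbors $v_k$ involved. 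The counting that makes this work is $\sum_{i=1}^{5}(c_i-1)=7+10+13+18+34=82$ plus $\Delta(G)$, which is why $L\ge\Delta(G)+83$ (and $2\Delta(G)-2$ for the first edge) suffices; moreover, since no recoloring is used, the argument immediately gives the list version stated as the paper's final theorem --- something an edge-deletion-plus-recoloring scheme would not yield. To repair your proposal you would either have to carry out the recoloring/path analysis in full for each of (\C1)--(\C6), or switch to the vertex-deletion scheme above.
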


\proof The theorem is proved by contradiction. Let $L$ stand for
$\max\{2\Delta(G)-2,\Delta(G)+83\}$. Suppose $G$ is a minimum
counterexample to the theorem. Then $G$ is $2$-connected and then
$\delta(G)\geq 2$.

\noindent\textbf{Case $1$}. $\delta(G)= 2$

Let $d(v)=2$ and $N(v)=\{v_1,v_2\}$. Suppose $v_1v_2\not\in E(G)$. By the minimality of $G$, the graph $G'=(G\backslash v)\cup \{v_1v_2\}$ has an acyclic $L$-edge coloring $\phi$ with color set $C$. Let $\tau(vv_1)=\phi(v_1v_2)$ and $\tau(vv_2)\in C\backslash \{\phi(v_2)\cup \tau(vv_1)\}$. For the edge $e\in E(G')-\{v_1v_2\}$, we remain $\tau(e)=\phi(e)$. Note that $|C\backslash \{\phi(v_2)\cup \tau(vv_1)\}|>0$. Then $\tau$ is an acyclic $L$-edge coloring of $G$, a contradiction. So $v_1v_2\in E(G)$. Let $G'=G\backslash v$. Then $G'$ has an acyclic $L$-edge coloring $\phi$ with color set $C$. Now we let $\tau(vv_1)\in S_1=C\backslash \{\phi(v_1)\cup\phi(v_2)\}$. Since $|C|\geq 2\Delta(G)-2$ and $|\phi(v_1)\cup\phi(v_2)|\leq 2\Delta(G)-3$, $|S_1|\geq 1$. Now we color $vv_2$ by $\tau(vv_2)\in S_2=C\backslash \{\phi(v_2)\cup \tau(vv_1)\}$. It is easy to see that $|S_2|>0$. For the edge $e\in E(G')$, we also remain $\tau(e)=\phi(e)$. Then $\tau$ is again an
acyclic $L$-edge coloring of $G$, a contradiction.

\noindent\textbf{Case $2$}. $\delta(G)\geq 3$

In this case, $G$ has one of the five configurations \{(\C2),(\C3),(\C4),(\C5),(\C6)\} which are described in Theorem \ref{structure}. Let $c_1=8$, $c_2=11$, $c_3=14$, $c_4=19$ and $c_5=35$. Suppose $G$ contains the $(d-1)$-th configuration, where
$d\in\{3,4,5,6,7\}$.

If $v_{d-1}v_d\not\in E(G)$, let $G'=(G\backslash v)\cup \{v_{d-1}v_d\}$. Otherwise, let $G'=G\backslash v$. Then $G'$ has an acyclic $L$-edge coloring $\phi$ with color set $C$. If $v_{d-1}v_d\not\in E(G)$, let $\tau(vv_{d-1})=\phi(v_{d-1}v_d)$. Otherwise, let $S_{d-1}=\phi(v_{d-1})\cup \phi(v_d)$. Now we color $vv_{d-1}$ by a color $\tau(vv_{d-1})\in C\backslash S_{d-1}$. Note that $|C|\geq 2\Delta(G)-2$ and $S_{d-1}\leq 2\Delta(G)-3$, we have $|C\backslash S_{d-1}|>0$. Let $S_d=\phi(v_1)\cup\cdots\cup\phi(v_{d-2})\cup \phi(v_d)$ and $S_i=\bigcup_{k=i}^{d-1}\phi(v_i)$ where $1\leq i\leq d-2$. Then we color $vv_d,vv_1,vv_2,\cdots, vv_{d-2}$ in turn as follows. Let $\tau(vv_d)\in T_d=C\backslash \{S_d\cup \tau(vv_{d-1})\}$. If $\tau(vv_{d-1})\not\in \phi(v_1)$, let $\tau(vv_1)\in T'_1=C\backslash \{(S_1\backslash \tau(vv_{d-1}))\cup \{\tau(vv_{d-1}),\tau(vv_d)\}\}$. Otherwise we let $\tau(vv_1)\in T_1=C\backslash \{S_1\cup \{\tau(vv_{d-1}),\tau(vv_d)\}\}$. At last, for each $2\leq i\leq d-2$, we let $\tau(vv_i)\in T_i=C\backslash \{S_i\cup \{\tau(vv_1),\cdots, \tau(vv_{i-1}),\tau(vv_{d-1}),\tau(vv_d)\}\}$. For the edge $e\in E(G')$, we still remain $\tau(e)=\phi(e)$. Note that $|T_{d-2}|\geq\cdots\geq |T_1|$, $|T'_1|\geq |T_1|$ and $\min\{|T_1|,|T_d|\}\geq L-(\sum_{k=1}^{d-2}(c_i-1)+\Delta(G))>0$. So this coloring $\tau$ does exist. It is easy to check that $\tau$ is proper and acyclic. So we have constructed a new coloring $\tau$ which is an acyclic $L$-edge coloring of $G$, a contradiction. This completes the proof of Theorem \ref{acyclic}. \hfill$\square$

\vspace{2mm}\noindent \textbf{Remark}. The proof of Theorem \ref{acyclic} above does not use recolorings, therefore, it actually yields a more general result as follows.
\begin{theorem}
Every simple $1$-planar graph $G$ is acyclically edge $L$-choosable where $L=\max\{2\Delta(G)-2,\Delta(G)+83\}$.
\end{theorem}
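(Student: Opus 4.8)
The plan is to run the proof of Theorem \ref{acyclic} line by line, with ``acyclic $L$-edge colouring'' replaced throughout by ``acyclic $L$-list-edge colouring'', and to observe at each step that the argument only ever \emph{extends} a colouring of a smaller graph and only ever needs to know that the relevant colour set has size at least $L$. Concretely, I would argue by induction on $|E(G)|$: assume for contradiction that $G$ together with lists $(L(e))_{e\in E(G)}$, $|L(e)|\ge L=\max\{2\Delta(G)-2,\Delta(G)+83\}$, is a counterexample with $|E(G)|$ minimum. If $\delta(G)\le 1$, delete a vertex of degree at most $1$, list-colour the remainder by minimality, and extend greedily; so we may assume $\delta(G)\ge 2$. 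Here one does not even need the $2$-connectivity invoked in the proof of Theorem \ref{acyclic}, since Theorem \ref{structure} applies to every simple $1$-planar graph once configuration (\C1) is excluded.

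For $\delta(G)=2$ I would take $v$ with $d(v)=2$, $N(v)=\{v_1,v_2\}$, and form $G'$ exactly as in Case~1 of Theorem \ref{acyclic}: $G'=(G\setminus v)\cup\{v_1v_2\}$ if $v_1v_2\notin E(G)$, otherwise $G'=G\setminus v$. When the edge $v_1v_2$ is newly created, give it the list $L(vv_1)$; then $G'$ is simple, $1$-planar, and has strictly fewer edges, so by minimality it has an acyclic $L$-list colouring $\phi$. Keep $\phi$ on $E(G')$ and colour $vv_1,vv_2$ by the same recipe as in the original proof, but drawing each colour from the list of the edge being coloured (e.g.\ $\tau(vv_1)=\phi(v_1v_2)\in L(vv_1)$ in the first subcase, and otherwise $\tau(vv_1)\in L(vv_1)\setminus(\phi(v_1)\cup\phi(v_2))$, then $\tau(vv_2)\in L(vv_2)\setminus(\phi(v_2)\cup\{\tau(vv_1)\})$). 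The cardinality estimates $|L(vv_1)|\ge 2\Delta(G)-2>|\phi(v_1)\cup\phi(v_2)|$, etc., depend only on $|L(\cdot)|\ge L$ and on vertex degrees, not on the lists coinciding, so each set is non-empty; and since $d(v)=2$ no bichromatic cycle through $v$ is created, exactly as before.

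For $\delta(G)\ge 3$ I would invoke Theorem \ref{structure} to obtain one of (\C2)--(\C6), put $c_1=8,c_2=11,c_3=14,c_4=19,c_5=35$ and $v=v_d$ with $d\in\{3,\dots,7\}$, and form $G'$ by deleting $v$ and, if needed, adding $v_{d-1}v_d$ with list $L(vv_{d-1})$. As above, $G'$ is simple, $1$-planar and smaller, hence has an acyclic $L$-list colouring $\phi$, which I retain on $E(G')$. I would then colour $vv_{d-1},vv_d,vv_1,\dots,vv_{d-2}$ in that order, taking each colour from the set $C\setminus(\cdot)$ prescribed in the proof of Theorem \ref{acyclic} but intersected with the list of the edge in question. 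The crucial inequalities there, namely $|T_{d-2}|\ge\cdots\ge|T_1|$, $|T_1'|\ge|T_1|$ and $\min\{|T_1|,|T_d|\}\ge L-\bigl(\sum_{k=1}^{d-2}(c_k-1)+\Delta(G)\bigr)>0$, survive unchanged because they only subtract quantities $|\phi(v_i)|\le c_i-1$ (or $\le\Delta(G)-1$) together with finitely many already-used colours from a set of size $\ge L$. Properness and acyclicity follow from the identical verification, which uses only that $v$ has at most seven neighbours and that their degrees obey the configuration's bounds. This yields an acyclic $L$-list-edge colouring of $G$, the desired contradiction.

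The step I expect to require the only genuinely new thought is the list attached to an edge introduced when passing to $G'$; choosing it equal to $L(vv_{d-1})$ (resp.\ $L(vv_1)$) is what makes the substitution $\phi(v_{d-1}v_d)\mapsto\tau(vv_{d-1})$ legitimate and keeps the downstream counts intact. Everything else is a term-by-term copy of the proof of Theorem \ref{acyclic}, precisely because that proof performs no recolouring, so the ``main obstacle'' is purely expository: making explicit that each cardinality bound used only $|\text{colour set}|\ge L$ and the degree information supplied by Theorem \ref{structure}.
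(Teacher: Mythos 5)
Your proposal is correct and follows essentially the same route as the paper, which obtains this theorem simply by observing that the proof of Theorem \ref{acyclic} never recolours and only uses lower bounds on the size of the available colour set, so it transfers verbatim to lists of size $L$. Your one added precaution---assigning the list $L(vv_{d-1})$ (resp.\ $L(vv_1)$) to the edge created when forming $G'$, so that $\tau(vv_{d-1})=\phi(v_{d-1}v_d)$ remains a legal choice---is exactly the detail the paper leaves implicit.
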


\end{document}